\documentclass[conference]{IEEEtran}
\IEEEoverridecommandlockouts
\usepackage{cite}
\usepackage{amsmath,amssymb,amsfonts}
\usepackage{amsthm}
\usepackage{algorithmic}
\usepackage{graphicx}
\usepackage{subfigure}
\usepackage{caption}
\usepackage{geometry}
\geometry{top=0.8in, bottom=1.0in, left=0.9in, right=0.9in} 
\usepackage{textcomp}
\usepackage{xcolor}
\newtheorem{theorem}{Theorem}

\newtheorem{lemma}[theorem]{Lemma}

\newtheorem{corollary}[theorem]{Corollary}
\newtheorem{defn}{Definition}

\def\BibTeX{{\rm B\kern-.05em{\sc i\kern-.025em b}\kern-.08em
    T\kern-.1667em\lower.7ex\hbox{E}\kern-.125emX}}

\long\def\symbolfootnote[#1]#2{\begingroup
\def\thefootnote{\fnsymbol{footnote}}\footnote[#1]{#2}\endgroup}
\setlength\columnsep{0.5cm}
\IEEEoverridecommandlockouts


\begin{document}

\title{Grid-like Error-Correcting Codes for Matrix Multiplication with Better Correcting Capability}

\author{Hao Shi$^\dagger$, Zhengyi Jiang$^\dagger$$^\ddagger$, Zhongyi Huang$^\dagger$,
Bo Bai$^\ddagger$, Gong Zhang$^\ddagger$, and Hanxu Hou$^\ddagger$\\
$^\dagger$ Department of Mathematics Sciences, Tsinghua University, Beijing, China \\
$^\ddagger$ Theory Lab, Central Research Institute, 2012 Labs, Huawei Tech. Co. Ltd., Hong Kong SAR
}
\maketitle
\begin{abstract}
\symbolfootnote[0]{
This work was partially supported by the National Key R\&D Program of
China (No. 2020YFA0712300), the National Natural Science Foundation of China (No. 62371411, 61901115, 12025104) and the Key Area Research and Development
Program of Guangdong Province under Grant 2020B0101110003. And this work was supported in part by the National Natural Science Foundation of China under Grant 62371411.
{\em Corresponding author: Hanxu Hou}.}
Matrix multiplication over the real field constitutes a foundational operation in the training of deep learning models, serving as a computational cornerstone for both forward and backward propagation processes. However, the presence of silent data corruption (SDC) in large-scale distributed training environments poses a significant threat to model convergence and predictive accuracy, particularly when such errors manifest during matrix multiplication. Due to their transient and non-intrusive nature, these errors often evade detection, allowing them to propagate and accumulate over time, ultimately leading to substantial degradation in model performance. In this paper, we introduce a novel error-correcting coding framework specifically tailored for matrix multiplication operations. Our proposed framework is designed to detect and correct multiple computational errors that may arise during the execution of matrix products. By leveraging a grid-based structural encoding scheme, our approach enhances error localization and correction capabilities across all participating matrices, thereby significantly improving the fault tolerance of the computation. Experimental results demonstrate that our method achieves deterministic correction of up to two erroneous symbols distributed across three matrices with 100\% reliability, while incurring only a 24\% overhead in computational time on GPU architectures. Furthermore, we provide a rigorous theoretical analysis of the error-correction properties inherent to our coding scheme, establishing its correctness and robustness under well-defined fault models.
	\end{abstract}

\section{Introduction}
Matrix multiplication over the real field, a fundamental operation in linear algebra, serves as a cornerstone of deep learning model training \cite{Goodfellow-et-al-2016}. This operation enables the efficient computation of linear transformations that form the backbone of neural network architectures. Beyond its role in forward propagation—where input data is processed through successive layers—matrix multiplication is equally critical in back-propagation \cite{lecun1998gradient}, facilitating the gradient computations necessary for optimization. Consequently, the computational efficiency and speed of matrix multiplication directly influence the performance and scalability of modern deep learning algorithms.

In large-scale distributed training environments, silent data corruption (SDC) in hardware components poses a critical challenge to model convergence \cite{10.1145/3579371.3589105, dixit2022detecting, 9366780}. Empirical studies indicate that a substantial fraction of such errors occur during matrix multiplication operations. Unlike catastrophic failures, SDC events are inherently stochastic and evade conventional detection mechanisms \cite{gemini}, often allowing training processes to proceed to completion without explicit crashes. This stealthy behavior arises from the non-intrusive nature of SDC, which manifests as computational errors without triggering system-level anomalies.
The insidious impact of SDC is particularly pronounced in computationally intensive operations such as matrix multiplication, where even minor numerical deviations can propagate and amplify across iterations, ultimately compromising model performance. Robust error detection and correction methodologies are therefore essential to safeguard training stability and reliability \cite{llama3blog}, ensuring both the accuracy of learned representations and the generalization capabilities of deep neural networks.

Algorithm-Based Fault Tolerance (ABFT) \cite{1676475} has emerged as a powerful paradigm for mitigating silent data corruption (SDC) in large-scale computing systems. This approach embeds fault detection and correction mechanisms directly into computational algorithms, enabling real-time verification with minimal performance overhead. Through the strategic incorporation of checksum \cite{braun2014abft} or weighted checksum \cite{jou1984fault} techniques within matrix multiplication operations, ABFT can efficiently detect computational anomalies indicative of SDC events. Upon error detection, the system can either perform immediate correction or mark the compromised data for recomputation, thereby preserving computational integrity.
While current ABFT implementations demonstrate effectiveness in handling single-error scenarios, they exhibit significant limitations. Existing solutions are primarily constrained to detecting and correcting single errors occurring exclusively in the output matrix. This restricted capability leaves systems vulnerable to multiple concurrent errors or even single errors occurring in input matrices, representing a critical gap in current fault tolerance methodologies.

Recent advances in fault-tolerant computing have introduced analog error-correcting codes \cite{8849843, roth2020analog, wei2024multiple, jiang2024analog} as a specialized solution for error correction in the real domain. These coding schemes are specifically designed to address the unique challenges of approximate matrix multiplication in the presence of hardware-induced imperfections. Unlike traditional digital error correction methods, analog error-correcting codes employ sophisticated mechanisms to: (1) detect and identify outlying errors, (2) distinguish these from acceptable computational noise, and (3) provide robust error estimation while maintaining tight error bounds. 

In this paper, we propose a novel error-correcting coding framework for matrix multiplication based on ABFT with better correct capability during the computation process. 
By using a grid-like structure, we can accurately detect and correct multiple errors that occurred in the three matrices during matrix multiplication,  as well as maintain the rapid correction of a single error. 
We explored and analyzed the maximum error-correcting capacity of our framework.
Moreover, in the experiment, our framework can support 100\% detection and correction of the six types of data silence corruptions we proposed (the last three have not been corrected by any algorithm before). In terms of execution time on GPU, not only is our performance basically on par with that of the checksum algorithm for the first three errors, but we can also correct two symbols at only 24\% more time cost for the last three errors.

\section{PRELIMINARIES}
\subsection{Matrix Multiplication}

Denote $\{1,2,\ldots,\ell\}$ as $[\ell]$, where $\ell$ is an integer. 
For a general matrix multiplication operation, we consider the following model:
the left matrix $A = (a_{i, \ell})_{i\in [n] , \ell \in [k]}$, the right matrix $B = (b_{\ell, j})_{\ell \in [k], j \in [m]}$ 
and the output matrix $C = (c_{i, j})_{i \in [n], j \in [m]}$,
\begin{eqnarray*}
   C &=& AB, \\
   c_{i, j} &=& \sum_{\ell = 1}^{k} a_{i, \ell} b_{\ell, j}.
\end{eqnarray*}

During the process of matrix multiplication, three common types of errors can occur:

\begin{enumerate}
  \item[(E1)] An error in a symbol of the left matrix $A$;
  \item[(E2)] An error in a symbol of the right matrix $B$;
  \item[(E3)] An error during the computation of a symbol in matrix $C$, which is the product of matrices $A$ and $B$.
\end{enumerate}
These are also the three types of errors addressed in this paper, while previous work (such as ABFT algorithms) only focuses on error (E3). Notice that, 
if error (E1) occurs, i.e., a symbol $a_{i,\ell}$ in matrix $A$ is corrupted, it affects all symbols in the $i$-th row of the product matrix $C$ due to the row-wise computation pattern of matrix multiplication. Similarly, if error (E2) occurs, 
corruption of a symbol $b_{\ell,j}$ in matrix $B$ propagates to all symbols in the $j$-th column of $C$, reflecting the column-wise nature of the computation. 
Thus, all three types of errors can ultimately be traced back to errors in 
the symbols of matrix $C$.
A critical observation is that errors occurring in matrices $A$ and $B$ must logically precede the computation of parity symbols. If errors were instead assumed to affect $A$ and $B$ after parity symbol generation, the parity symbols themselves would inherently incorporate these errors, thereby invalidating their corrective function.

\subsection{Analog Error-Correcting Codes}
Analog error-correcting codes \cite{8849843, roth2020analog, wei2024multiple, jiang2024analog} are coding schemes that provide the 
ability to locate computational errors while using analog devices for approximate real matrix multiplication.

Consider that the output matrix $C$ may be different from the ideal computation
$C = AB \in \mathbb{R}^{n \times m}$ due to the effect of two events,
\begin{equation}\label{eq:analog}
C = AB + \boldsymbol{\varepsilon} + \boldsymbol{e},
\end{equation}
where $\boldsymbol{\varepsilon} = (\varepsilon_{i,j})_{i=1,\ldots,n}^{j=1,\ldots,m} \in \mathbb{R}^{n \times m}$, $\boldsymbol{e} = (e_{i,j})_{i=1,\ldots,n}^{j=1,\ldots,m} \in \mathbb{R}^{n \times m}$ denote the error matrices and each symbol $\varepsilon_{i,j}$ is bounded within $[-\delta, \delta]$ for some predefined $\delta > 0$. These bounded errors represent tolerable computational inaccuracies or circuit-level noise. In contrast, the symbols $e_{i,j}$ correspond to substantial errors that may originate from severe hardware faults such as stuck cells or short-circuit conditions in the memory array.

The primary objective of analog error-correcting codes is to develop an encoding scheme capable of: (i) detecting nonzero symbols of $\boldsymbol{e}$ exceeding the threshold interval $[-\Delta, \Delta]$, where $\Delta$ is minimized subject to system constraints; and (ii) estimating the magnitudes of these significant errors, under the assumption that their cardinality remains below specified bounds. In this paper, we set $\delta = \Delta$.

\section{Algorithm Framework}

\begin{figure*}
\label{fig:Generalframework}
  \centering
  \subfigure{
    \includegraphics[width=0.9\textwidth]{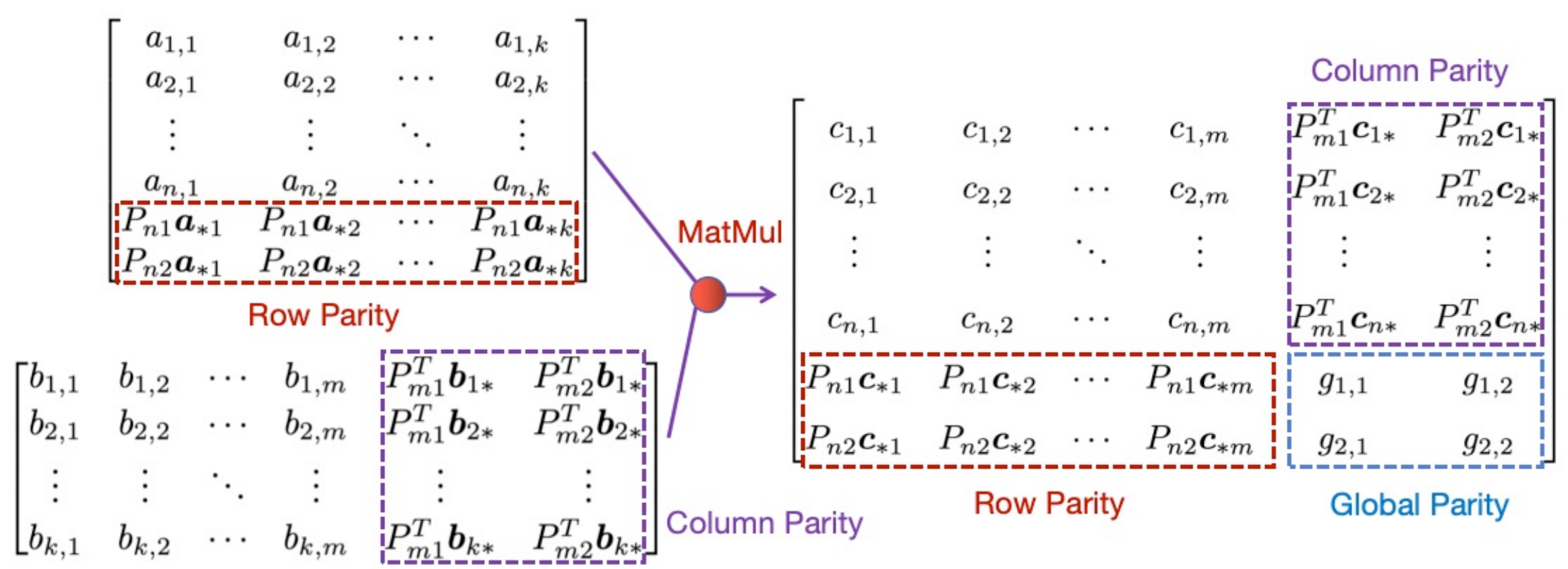}
  }
  \caption{The general framework $\mathcal{M}(n+2, k, m+2)$ of our error-correcting coding scheme for matrix multiplication.}
\end{figure*}

In this section, we introduce our framework for error-correcting in matrix multiplication. Define that 
\begin{eqnarray*}
  && \boldsymbol{a}_{*i} = (a_{1,i}, a_{2,i}, \ldots, a_{n,i})^T, \quad i \in [k], \\
  && \boldsymbol{b}_{i*} = (b_{i,1}, b_{i,2}, \ldots, b_{i,m}), \quad i \in [k], \\
  && \boldsymbol{c}_{*i} = (c_{1,i}, c_{2,i}, \ldots, c_{n,i})^T, \quad i \in [m],\\
  && \boldsymbol{c}_{i*} = (c_{i,1}, c_{i,2}, \ldots, c_{i,m}), \quad i \in [n].\\
\end{eqnarray*}
We add two parity rows below the matrix $A$ to form a new $(n+2) \times k$ matrix 
\begin{eqnarray*}
  \bar{A} = \begin{bmatrix}
    A \\
    A_P
  \end{bmatrix}
\end{eqnarray*}
where
\begin{eqnarray*}
A_P = \begin{bmatrix}
  P_{n1}\boldsymbol{a}_{*1} & P_{n1}\boldsymbol{a}_{*2} & \cdots & P_{n1}\boldsymbol{a}_{*k} \\
  P_{n2}\boldsymbol{a}_{*1} & P_{n2}\boldsymbol{a}_{*2} & \cdots & P_{n2}\boldsymbol{a}_{*k} \\
\end{bmatrix}
\end{eqnarray*}
and 
\begin{eqnarray*}
  P_{n1} &=& (1,1,\ldots,1), \\
  P_{n2} &=& (1,2,\ldots,n).
\end{eqnarray*}
Similarly, we add two parity columns behind the matrix $B$ to form a new $k \times (m+2)$ matrix
\begin{eqnarray*}
  \bar{B} = \begin{bmatrix}
    B & B_P
  \end{bmatrix},
\end{eqnarray*}
where
\begin{eqnarray*}
B_P = \begin{bmatrix}
  P_{m1}\boldsymbol{b}^T_{1*} & P_{m1}\boldsymbol{b}_{2*}^T & \cdots & P_{m1}\boldsymbol{b}_{k*}^T \\
  P_{m2}\boldsymbol{b}_{1*}^T & P_{m2}\boldsymbol{b}_{2*}^T& \cdots & P_{m2}\boldsymbol{b}_{k*}^T \\
\end{bmatrix}^T
\end{eqnarray*}
and 
\begin{eqnarray*}
  P_{m1} &=& (1,1,\ldots,1), \\
  P_{m2} &=& (1,2,\ldots,m).
\end{eqnarray*}
In other words, we use the $(n+2) \times k$ generator matrix
\begin{eqnarray*}
  G_A = \begin{bmatrix}
    I_n \\
    P_{n1} \\
    P_{n2}
  \end{bmatrix}
\end{eqnarray*}
and the $k \times (m+2)$ generator matrix
\begin{eqnarray*}
  G_B = \begin{bmatrix}
    I_m & P^T_{m1} & P^T_{m2}
  \end{bmatrix}
\end{eqnarray*}
to generate the new left matrix $\bar{A}$ and the new right matrix $\bar{B}$, respectively, i.e.,
\begin{eqnarray*}
  \bar{A} = G_AA, \quad \bar{B} = BG_B.
\end{eqnarray*}
Then we compute the matrix with a grid-like structure
\begin{eqnarray*}
  \bar{C} = \bar{A}\bar{B} = 
  \begin{bmatrix}
    I_n \\
    P_{n1} \\
    P_{n2}
  \end{bmatrix}AB \begin{bmatrix}
    I_m & P^T_{m1} & P^T_{m2}
  \end{bmatrix} \triangleq 
  \begin{bmatrix}
    C & P_{r} \\
    P_{c} & P_{g}
  \end{bmatrix},
\end{eqnarray*}
where $P_{r}$ and $P_{c}$ are the parity rows and columns, respectively,
\begin{eqnarray*}
  P_{r} &=& I_n AB \begin{bmatrix}
    P^T_{m1} & P^T_{m2}
  \end{bmatrix} \\
  &=& \begin{bmatrix}
    \boldsymbol{c}_{1*}^T & \boldsymbol{c}_{2*}^T & \cdots & \boldsymbol{c}_{n*}^T
  \end{bmatrix}^T
  \begin{bmatrix}
      P^T_{m1} & P^T_{m2}
  \end{bmatrix}\\
  &=& \begin{bmatrix}
    P_{m1}\boldsymbol{c}^T_{1*} & P_{m1}\boldsymbol{c}_{2*}^T & \cdots & P_{m1}\boldsymbol{c}_{n*}^T \\
    P_{m2}\boldsymbol{c}_{1*}^T & P_{m2}\boldsymbol{c}_{2*}^T & \cdots & P_{m2}\boldsymbol{c}_{n*}^T \\
  \end{bmatrix}^T,
\end{eqnarray*}
\begin{eqnarray*}
  P_{c} &=& \begin{bmatrix}
    P_{n1} \\
    P_{n2}
  \end{bmatrix}AB I_m\\
  &=& \begin{bmatrix}
    P_{n1} \\
    P_{n2}
  \end{bmatrix}
  \begin{bmatrix}
    \boldsymbol{c}_{*1} & \boldsymbol{c}_{*2} & \cdots & \boldsymbol{c}_{*m}
  \end{bmatrix}\\
  &=& \begin{bmatrix}
    P_{n1}\boldsymbol{c}_{*1} & P_{n1}\boldsymbol{c}_{*2} & \cdots & P_{n1}\boldsymbol{c}_{*m} \\
    P_{n2}\boldsymbol{c}_{1*} & P_{n2}\boldsymbol{c}_{2*} & \cdots & P_{n2}\boldsymbol{c}_{m*} \\
  \end{bmatrix},
\end{eqnarray*}
and $P_{g}$ is the global parity,
\begin{eqnarray*}
  P_{g} &=& \begin{bmatrix}
    P_{n1} \\
    P_{n2}
  \end{bmatrix}AB \begin{bmatrix}
    P_{m1} & P_{m2}
  \end{bmatrix}\\
  &=& \begin{bmatrix}
    P_{n1} \\
    P_{n2}
  \end{bmatrix} C \begin{bmatrix}
    P_{m1} & P_{m2}
  \end{bmatrix}\\
  &=& \begin{bmatrix}
    P_{n1}CP_{m1} & P_{n1}CP_{m2} \\
    P_{n2}CP_{m1} & P_{n2}CP_{m2}
  \end{bmatrix}.
\end{eqnarray*}

Now let's analyze the special structure of the output matrix $C$ in detail.

\begin{defn}
For a matrix $\bar{C} = (\bar{c}_{i,j})_{i=1,2,\ldots,n+2}^{j=1,2,\ldots,m+2} \in \mathbb{R}^{(n+2) \times (m+2)}$, we say that the matrix $\bar{C}$ has a {\em grid-like structure} if the following constraints are satisfied:
\begin{enumerate}
  \item Each column of $j \in \{1,2,\ldots,m\}$ satisfies the constraints 
  \begin{eqnarray}\label{cons:1}
    \sum_{i=1}^{n+2} \alpha_{i}^{(\ell)}\bar{c}_{i,j} = 0, \quad \text{for } \alpha^{(\ell)}_{i} \in \mathbb{R}, \ell \in \{1, 2\}.
  \end{eqnarray}

  \item Each row of $i \in \{1,2,\ldots,n\}$ satisfies the constraints
  \begin{eqnarray}\label{cons:2}
    \sum_{j=1}^{m+2} \beta_{j}^{(\ell)}\bar{c}_{i,j} = 0, \quad \text{for } \beta^{(\ell)}_{j} \in \mathbb{R}, \ell \in \{1, 2\}.
  \end{eqnarray}

  \item Each symbol of the matrix $\bar{C}$ satisfies the constraints
  \begin{eqnarray} \label{cons:3}
    \sum_{i=1}^{n+2} \sum_{j=1}^{m+2} \gamma_{i,j}^{(\ell)} \bar{c}_{i,j} &=& 0, \quad \\&&\text{for } \gamma^{(\ell)}_{i,j} \in \mathbb{R}, \ell \in \{1,2,3,4\}. \nonumber
  \end{eqnarray}
\end{enumerate}
\end{defn}

Remark that in our framework the matrix $\bar{C}$ has a grid-like structure and we call it $\mathcal{M}(n+2, k, m+2)$. The framework $\mathcal{M}(n+2, k, m+2)$ has been shown in Fig.~1 intuitively, 
constraint \eqref{cons:1} guarantee that each column $j$ (for $j = 1,\ldots,m$) contains two local parity symbols $\{\bar{c}_{n+1,j}, \bar{c}_{n+2,j}\}$; 
constraint \eqref{cons:2} ensure that every row $i$ (for $i = 1,\ldots,n$) includes two local parity symbols $\{\bar{c}_{i,m+1}, \bar{c}_{i,m+2}\}$;
and constraint \eqref{cons:3} establish four global parity symbols $\{\bar{c}_{n+1,m+1}, \bar{c}_{n+1,m+2}, \bar{c}_{n+2,m+1}, \bar{c}_{n+2,m+2}\}$ at the matrix boundaries.
 
\section{Error correction capability}
In this section, we analyze the error correction capability of our framework $\mathcal{M}(n+2, k, m+2)$. In this paper, we define {\em detect} as identifying the location of errors, whereas {\em correct} refers to rectifying the detected errors.

In our framework, the criterion for judging the error correction capability is not the number of errors that occur, but the 
distribution of the errors. Specifically, how many rows and columns of the matrix $C$ all the errors are distributed in.
Suppose that all errors in matrix $C$ are distributed in $s$ rows and $t$ columns of matrix $C$, where $s \leq n$ and $t \leq m$, and we label this error pattern as E$(s, t)$.
We have the following theorems that can accurately determine the error correction capability of our $\mathcal{M}(n+2,k,m+2)$.

\begin{theorem}
  The framework $\mathcal{M}(n+2, k, m+2)$ can detect and  correct all error pattern E$(s, t)$ (all errors 
  distributed in $s$ rows and $t$ columns) 
  of matrix $C$ if $\max (s, t) \leq 2$.
\end{theorem}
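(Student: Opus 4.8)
The plan is to exhibit a syndrome-based decoder and prove it recovers the error exactly. Write the received (corrupted) array as $\bar{C}+E$, where, as noted earlier, every relevant error ultimately appears as additive corruptions $e_{i,j}$ in the data block $C$, and the hypothesis E$(s,t)$ with $\max(s,t)\le 2$ confines the support of $E$ to a set $R$ of at most two rows and a set $\Gamma$ of at most two columns; the two parity rows, two parity columns, and four global parities are error-free by assumption. The goal is to determine $R$, $\Gamma$, and the values $\{e_{i,j}\}$ uniquely from the grid constraints, then subtract them.

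First I would form the syndromes dictated by the grid structure. Applying the two column constraints \eqref{cons:1} to each data column $j\in[m]$ yields a pair
\begin{eqnarray*}
  \sigma^{(0)}_j = \sum_{i\in R} e_{i,j}, \qquad \sigma^{(1)}_j = \sum_{i\in R} i\,e_{i,j},
\end{eqnarray*}
and applying the two row constraints \eqref{cons:2} to each data row $i\in[n]$ yields
\begin{eqnarray*}
  \tau^{(0)}_i = \sum_{j\in\Gamma} e_{i,j}, \qquad \tau^{(1)}_i = \sum_{j\in\Gamma} j\,e_{i,j};
\end{eqnarray*}
the weights $1$ and the index ($i$ or $j$) come from $P_{n1},P_{n2}$ and $P_{m1},P_{m2}$, and the error-free parities guarantee these syndromes register only the data errors.

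The heart of the argument is an exact localization step. Because $s,t\le 2$, every data line carries at most two errors, so I would prove that a line is error-free if and only if both of its syndromes vanish: if a row held errors in two columns $\gamma_1\neq\gamma_2$ with $\tau^{(0)}_i=\tau^{(1)}_i=0$, the $2\times2$ Vandermonde matrix $\begin{bmatrix}1&1\\\gamma_1&\gamma_2\end{bmatrix}$ being nonsingular would force both error values to zero, a contradiction (and symmetrically for columns). Hence $R=\{\,i:(\tau^{(0)}_i,\tau^{(1)}_i)\neq 0\,\}$ and $\Gamma=\{\,j:(\sigma^{(0)}_j,\sigma^{(1)}_j)\neq 0\,\}$ are recovered exactly. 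With the error columns $\Gamma$ now known, for each $i\in R$ the pair $(\tau^{(0)}_i,\tau^{(1)}_i)$ becomes a $|\Gamma|\times|\Gamma|$ Vandermonde system in the unknowns $\{e_{i,j}\}_{j\in\Gamma}$ with distinct nodes, which is nonsingular and solved uniquely; substituting these values back corrects $C$. Uniqueness of the entire decoding is then automatic, since every quantity above is a deterministic function of the syndromes.

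I expect the main obstacle to be the case where two errors share a single row or a single column — for instance two corruptions in one column under pattern E$(2,2)$. That line's own two checks cannot simultaneously locate and evaluate two errors, so a purely row-wise or column-wise decoder fails. The resolution I would emphasize is exactly the transverse localization above: the error columns are pinned down from the column syndromes and the error rows from the row syndromes, after which each line's two checks act on already-known positions and the residual Vandermonde systems become square and invertible. I would also verify that the degenerate counts $(s,t)\in\{(1,1),(1,2),(2,1),(2,2)\}$ all collapse into this scheme, and note that the four global parities \eqref{cons:3} are not required for this pattern, leaving slack that will matter once corrupted parities are allowed.
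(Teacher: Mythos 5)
Your proposal is correct and follows essentially the same route as the paper's proof: use the two row syndromes to locate the (at most two) error rows, the two column syndromes to locate the (at most two) error columns, and then solve the resulting $2\times 2$ systems against the parity symbols. Your write-up is in fact more rigorous than the paper's, since you justify the localization step and the solvability of the correction step via the nonsingularity of the $2\times 2$ Vandermonde matrices, which the paper only asserts implicitly.
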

\begin{proof}
    Please see Appendix A.
\end{proof}

Note that the framework $\mathcal{M}(n + 2, k, m + 2)$ can't always detect all error pattern $E(s, t)$ of matrix $C$ if $\max(s,t) > 2 $.
We examine a straightforward example involving two distinct error patterns:
$$\boldsymbol{e}_1 = 
\begin{bmatrix}
1 & -2 & 1 & 0 & \cdots & 0\\
-1 & 2 & -1 & 0 & \cdots & 0\\
0 & 0 & 0 & 0 & \cdots & 0\\
\vdots & \vdots & \vdots & \vdots & \ddots & \vdots \\
0 & 0 & 0 & 0 & \cdots & 0\\
\end{bmatrix}$$
and 
$$\boldsymbol{e}_2 = 
\begin{bmatrix}
0 & 0 & 0 & 0 & \cdots & 0\\
0 & 0 & 0 & 0 & \cdots & 0\\
2 & -4 & 2 & 0 & \cdots & 0\\
-2 & 4 & -2 & 0 & \cdots & 0\\
0 & 0 & 0 & 0 & \cdots & 0\\
\vdots & \vdots & \vdots & \vdots & \ddots & \vdots \\
0 & 0 & 0 & 0 & \cdots & 0\\
\end{bmatrix}.$$
Notably, these two error patterns yield identical local row checksum and column parity symbols, making them indistinguishable through our framework. 

\begin{theorem}
    The framework $\mathcal{M}(n + 2, k, m + 2)$ can always correct all error pattern $E(s, t)$ of matrix $C$ if $\min(s,t) \leq 2$ and the erroneous symbols are known to be localized within specific rows and columns of the result matrix $C$.
\end{theorem}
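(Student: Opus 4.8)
The plan is to exploit the extra hypothesis that the erroneous rows and columns are already known, which removes the localization burden entirely and reduces the global decoding problem to small, decoupled linear systems. By the symmetry of the construction---transposing the roles of rows and columns interchanges the column-parity constraints \eqref{cons:1} with the row-parity constraints \eqref{cons:2}---I may assume without loss of generality that $s = \min(s,t) \le 2$; that is, all errors lie in a known set $R$ of at most two rows. I will then correct $C$ one column at a time, using only the two column-parity constraints \eqref{cons:1}.

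Fix a column $j \in [m]$ and write $E_{i,j}$ for the unknown additive error in entry $\bar c_{i,j}$. Because the clean matrix $\bar C$ satisfies \eqref{cons:1} with the weight vectors $P_{n1} = (1,\dots,1)$ and $P_{n2} = (1,2,\dots,n)$, the two observed checksum discrepancies in column $j$ equal $S_1(j) = \sum_{i=1}^{n} E_{i,j}$ and $S_2(j) = \sum_{i=1}^{n} i\,E_{i,j}$, both computable from the received entries and the stored parity symbols $\bar c_{n+1,j},\bar c_{n+2,j}$. Since the errors in column $j$ are supported on the known row set $R = \{r_1,r_2\}$, these two identities collapse to $E_{r_1,j} + E_{r_2,j} = S_1(j)$ and $r_1 E_{r_1,j} + r_2 E_{r_2,j} = S_2(j)$, a $2 \times 2$ system in the only two unknowns.

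The coefficient matrix of this system is the Vandermonde matrix $\left[\begin{smallmatrix} 1 & 1 \\ r_1 & r_2 \end{smallmatrix}\right]$, whose determinant $r_2 - r_1$ is nonzero because the two row indices are distinct. Hence the system has a unique solution, recovering $E_{r_1,j}$ and $E_{r_2,j}$ exactly, and subtracting them corrects column $j$. The degenerate cases are immediate: if $|R| = 1$ the single constraint from $P_{n1}$ already determines the lone error, and if $|R| = 0$ there is nothing to correct. Sweeping this procedure over all $j \in [m]$ repairs every error irrespective of how large $t$ is, since columns with no error yield zero syndromes and hence zero correction. The case $t \le 2$ is handled identically after transposing, applying \eqref{cons:2} with node set the known erroneous column indices.

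The only place genuine content enters is the Vandermonde invertibility furnished by the distinct integer nodes in $P_{n2}$ (respectively $P_{m2}$), which is precisely what allows two checksums to pin down two unknowns per line. I expect the main obstacle to be conceptual rather than computational: the correct and explicit use of the localization hypothesis. The indistinguishable pair $\boldsymbol{e}_1,\boldsymbol{e}_2$ preceding the theorem shows that, absent knowledge of the supporting rows and columns, the framework cannot even detect such patterns once $\max(s,t) > 2$; the assumption must therefore be invoked exactly to justify restricting each line's unknowns to the known support, thereby decoupling the problem into the independent $2 \times 2$ (or smaller) Vandermonde systems solved above.
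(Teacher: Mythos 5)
Your proposal is correct and follows essentially the same route as the paper's proof: reduce by symmetry to $s=\min(s,t)\le 2$ known rows, then correct column by column using the two column-parity checks, which for two erroneous rows amounts to the invertible $2\times 2$ Vandermonde system that the paper writes out explicitly in its Detection and Correcting Process section. Your version is somewhat more explicit than the paper's appendix proof (which merely asserts correctability from the parity symbols), but the underlying argument is identical.
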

\begin{proof}
    Please see Appendix B.
\end{proof}

\begin{lemma}
  The framework $\mathcal{M}(n+2, k, m+2)$ can't always correct all error pattern E$(s, t)$ of matrix $C$ if $s=3$ and $t=3$  and the erroneous symbols are known to be localized within specific rows and columns of the result matrix $C$. 
\end{lemma}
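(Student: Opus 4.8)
The plan is to establish non-correctability by exhibiting two distinct error patterns that are both confined to the same three rows and three columns of $C$ yet produce identical parity symbols, so that even a decoder told exactly which rows and columns are affected cannot decide between them. Equivalently, it suffices to produce a single nonzero error matrix $D$ supported on the $3\times 3$ intersection grid of the specified rows and columns that annihilates every checksum of $\mathcal{M}(n+2,k,m+2)$; then $D$ and the all-zero error are indistinguishable, since the received matrix $\bar{C}+D$ is consistent both with the pair (true codeword $\bar{C}$, error $D$) and with the pair (true codeword $\bar{C}+D$, error $0$), and both hypothesized errors lie in the same known grid.

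First I would set up the syndrome equations for an error localized in rows $R=\{r_1,r_2,r_3\}$ and columns $S=\{c_1,c_2,c_3\}$. Such an error is described by a $3\times 3$ block $D=(d_{ab})$ with $a,b\in\{1,2,3\}$, sitting at positions $(r_a,c_b)$. Because the two column parities use the weight vectors $P_{n1}=(1,\ldots,1)$ and $P_{n2}=(1,2,\ldots,n)$, the vanishing of the column checksums in each affected column $c_b$ requires the $b$-th column of $D$ to be orthogonal to $(1,1,1)$ and to $(r_1,r_2,r_3)$. Symmetrically, the two row parities (weights $P_{m1},P_{m2}$) require the $a$-th row of $D$ to be orthogonal to $(1,1,1)$ and to $(c_1,c_2,c_3)$.

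Next I would solve this system. Since the chosen row indices are distinct, $(1,1,1)$ and $(r_1,r_2,r_3)$ are linearly independent, so their orthogonal complement in $\mathbb{R}^3$ is one-dimensional, spanned by a nonzero vector $\boldsymbol{u}$; hence every column of $D$ is a scalar multiple of $\boldsymbol{u}$. The same reasoning gives a nonzero $\boldsymbol{v}$ with every row of $D$ a multiple of $\boldsymbol{v}$. Consequently any common solution must be rank one, $D=\boldsymbol{u}\boldsymbol{v}^{T}$, and conversely this rank-one matrix satisfies all row and column checks; moreover each of the four global checks factors as $\big(\sum_a w^{(\ell)}_{r_a}u_a\big)\big(\sum_b w^{(\ell')}_{c_b}v_b\big)$ with weights taken from $P_{n1},P_{n2}$ and $P_{m1},P_{m2}$, and both factors vanish by construction, so the global parities are annihilated as well. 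For a concrete witness I would take $R=S=\{1,2,3\}$, solving $u_1+u_2+u_3=0$ and $u_1+2u_2+3u_3=0$ to get $\boldsymbol{u}=\boldsymbol{v}=(1,-2,1)^{T}$, so that $D=\boldsymbol{u}\boldsymbol{v}^{T}$ is the nonzero $3\times 3$ block with entries $d_{ab}=u_a v_b$; this is exactly the rank-one structure underlying the pair $\boldsymbol{e}_1,\boldsymbol{e}_2$ displayed earlier.

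The main obstacle is that the syndrome system is overdetermined --- twelve linear constraints in only nine unknowns --- so the existence of a nonzero annihilating error is not guaranteed a priori. The crux is the observation that the orthogonality constraints collapse each column (and each row) to a single degree of freedom and thereby force the rank-one form $\boldsymbol{u}\boldsymbol{v}^{T}$, which then simultaneously satisfies the row, column, and global checks. With such $D$ in hand the indistinguishability argument of the first paragraph completes the proof: because $\bar{C}+D$ yields the same syndrome as $\bar{C}$ and the two admissible error explanations occupy the same known three rows and three columns, no decoder can always recover the correct $\bar{C}$, so $\mathcal{M}(n+2,k,m+2)$ cannot always correct error pattern $E(3,3)$ even when the affected rows and columns are known.
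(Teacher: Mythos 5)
Your proof is correct, and it reaches the paper's conclusion by a genuinely different route. The paper writes out all sixteen syndrome equations as an explicit $16\times 9$ linear system and performs Gaussian elimination to show the coefficient matrix has rank $8<9$, concluding that the nine error values cannot be uniquely determined. You instead exhibit the kernel directly: the two column-parity weight vectors restricted to the three affected rows are $(1,1,1)$ and $(r_1,r_2,r_3)$, which are independent, so each column of the error block lies in a one-dimensional orthogonal complement spanned by $\boldsymbol{u}$, and symmetrically each row lies in the span of $\boldsymbol{v}$; this forces the kernel to be exactly $\mathrm{span}(\boldsymbol{u}\boldsymbol{v}^{T})$, and the four global parities vanish on $\boldsymbol{u}\boldsymbol{v}^{T}$ because they factor as products of the already-vanishing row and column sums. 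The two arguments are two sides of the same linear-algebra fact (nontrivial kernel of the syndrome map on the $3\times 3$ grid), but yours buys more: it produces an explicit confusable pair (generalizing the paper's $\boldsymbol{e}_1,\boldsymbol{e}_2$ example), it explains structurally why the rank is $8$ rather than verifying it by computation, and it makes transparent why the obstruction persists for any $\min(s,t)\geq 3$ (Corollary 4) and for any choice of two degree-one weight vectors. The only point to phrase carefully is the indistinguishability step: rather than asserting that $\bar{C}+D$ is itself a valid codeword (it need not be a product $\bar{A}'\bar{B}'$), it suffices to note that the error assignments $\boldsymbol{e}$ and $\boldsymbol{e}+D$ produce identical syndromes, so no decoder restricted to the known rows and columns can always output the true error; your first paragraph essentially says this already.
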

\begin{proof}
    Please see Appendix C.
\end{proof}

  \begin{table*}[htbp]
\centering
\begin{tabular}{|c|cccc|cccc|cccc|}
\hline
                       & \multicolumn{4}{c|}{$\delta=0.5$}                                                                    & \multicolumn{4}{c|}{$ \delta=0.1$}                                                                    & \multicolumn{4}{c|}{$ \delta=0.01$}                                                                    \\ \hline
                       & \multicolumn{2}{c|}{$\mathcal{M}(n,k,m)$}                        & \multicolumn{2}{c|}{Checksum}               & \multicolumn{2}{c|}{$\mathcal{M}(n,k,m)$}                        & \multicolumn{2}{c|}{Checksum}               & \multicolumn{2}{c|}{$\mathcal{M}(n,k,m)$}                        & \multicolumn{2}{c|}{Checksum}               \\ \hline
\multicolumn{1}{|c|}{} & \multicolumn{1}{c|}{Rate} & \multicolumn{1}{c|}{Latency} & \multicolumn{1}{c|}{Rate} & Latency & \multicolumn{1}{c|}{Rate} & \multicolumn{1}{c|}{Latency} & \multicolumn{1}{c|}{Rate} & Latency & \multicolumn{1}{c|}{Rate} & \multicolumn{1}{c|}{Latency} & \multicolumn{1}{c|}{Rate} & Latency \\ \hline
a                      & \multicolumn{1}{c|}{$100\%$}      & \multicolumn{1}{c|}{1.28x}        & \multicolumn{1}{c|}{$100\%$}      &    \multicolumn{1}{c|}{1.27x}     & \multicolumn{1}{c|}{$100\%$}      & \multicolumn{1}{c|}{1.28x}        & \multicolumn{1}{c|}{$100\%$}      &     \multicolumn{1}{c|}{1.29x}    & \multicolumn{1}{c|}{$100\%$}      & \multicolumn{1}{c|}{1.29x}        & \multicolumn{1}{c|}{$100\%$}      &       \multicolumn{1}{c|}{1.29x}  \\ \hline
b                     & \multicolumn{1}{c|}{$100\%$}      & \multicolumn{1}{c|}{1.28x}        & \multicolumn{1}{c|}{$100\%$}      &    \multicolumn{1}{c|}{1.28x}     & \multicolumn{1}{c|}{$100\%$}      & \multicolumn{1}{c|}{1.29x}        & \multicolumn{1}{c|}{$100\%$}      &     \multicolumn{1}{c|}{1.28x}    & \multicolumn{1}{c|}{$100\%$}      & \multicolumn{1}{c|}{1.28x}        & \multicolumn{1}{c|}{$100\%$}      &       \multicolumn{1}{c|}{1.29x}  \\ \hline
c                      & \multicolumn{1}{c|}{$100\%$}      & \multicolumn{1}{c|}{1.19x}        & \multicolumn{1}{c|}{$100\%$}      &    \multicolumn{1}{c|}{1.18x}     & \multicolumn{1}{c|}{$100\%$}      & \multicolumn{1}{c|}{1.18x}        & \multicolumn{1}{c|}{$100\%$}      &     \multicolumn{1}{c|}{1.18x}    & \multicolumn{1}{c|}{$100\%$}      & \multicolumn{1}{c|}{1.18x}        & \multicolumn{1}{c|}{$100\%$}      &     \multicolumn{1}{c|}{1.18x}    \\ \hline
d                      & \multicolumn{1}{c|}{$100\%$}      & \multicolumn{1}{c|}{1.36x}        & \multicolumn{1}{c|}{$-$}      &     \multicolumn{1}{c|}{$-$}    & \multicolumn{1}{c|}{$100\%$}      & \multicolumn{1}{c|}{$1.37$x}        & \multicolumn{1}{c|}{$-$}      &    \multicolumn{1}{c|}{$-$}      & \multicolumn{1}{c|}{$100\%$}      & \multicolumn{1}{c|}{$1.36$x}        & \multicolumn{1}{c|}{$-$}      &       \multicolumn{1}{c|}{$-$}   \\ \hline
e                      & \multicolumn{1}{c|}{$100\%$}      & \multicolumn{1}{c|}{1.36x}        & \multicolumn{1}{c|}{$-$}      &     \multicolumn{1}{c|}{$-$}    & \multicolumn{1}{c|}{$100\%$}      & \multicolumn{1}{c|}{$1.37$x}        & \multicolumn{1}{c|}{$-$}      &    \multicolumn{1}{c|}{$-$}      & \multicolumn{1}{c|}{$100\%$}      & \multicolumn{1}{c|}{$1.36$x}        & \multicolumn{1}{c|}{$-$}      &       \multicolumn{1}{c|}{$-$}   \\ \hline
f                      & \multicolumn{1}{c|}{$100\%$}      & \multicolumn{1}{c|}{1.24x}        & \multicolumn{1}{c|}{$-$}      &     \multicolumn{1}{c|}{$-$}    & \multicolumn{1}{c|}{$100\%$}      & \multicolumn{1}{c|}{1.26x}        & \multicolumn{1}{c|}{$-$}      &    \multicolumn{1}{c|}{$-$}     & \multicolumn{1}{c|}{$100\%$}      & \multicolumn{1}{c|}{1.25x}        & \multicolumn{1}{c|}{$-$}      &      \multicolumn{1}{c|}{$-$}   \\ \hline
\end{tabular}
\label{tab:1}
\caption{The simulation result of our framework $M(n=1024, k=4096, m=1024)$ and checksum algorithm.}
\end{table*}

The above lemma tells us that for any error pattern E$(s=3,t=3)$ in $C$, we may not always be able to correct 
it unless we have determined that at least one of the nine positions must be correct so that we can solve the system of equations and get the specific value of each error. Below we will give the theorem with a more universal conclusion.

\begin{corollary}
  The framework $\mathcal{M}(n+2, k, m+2)$ can't always correct all error pattern E$(s, t)$ of matrix $C$ if $\min (s, t) \geq 3$ and the erroneous symbols are known to be localized within specific rows and columns of the result matrix $C$.
\end{corollary}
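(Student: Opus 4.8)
The plan is to prove non-correctability by exhibiting, for every $s\ge 3$ and $t\ge 3$, a nonzero error matrix $\boldsymbol{e}$ supported on the prescribed $s$ rows and $t$ columns of $C$ on which every constraint in \eqref{cons:1}--\eqref{cons:3} evaluates to zero. Such an $\boldsymbol{e}$ certifies failure immediately: for any error $\boldsymbol{e}_1$ confined to the same block, $\boldsymbol{e}_1$ and $\boldsymbol{e}_1+\boldsymbol{e}$ produce identical local and global parity data, so a decoder that is merely told which $s$ rows and $t$ columns are affected cannot tell them apart. This both recovers the Lemma (the $s=t=3$ case) and extends it to arbitrary index sets, which is exactly what an embedding argument built purely from the Lemma would need.

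For the construction, write the occupied rows as $i_1,\dots,i_s$ and the occupied columns as $j_1,\dots,j_t$. The column constraints \eqref{cons:1} demand that, in each occupied column, the restriction of $\boldsymbol{e}$ to these rows be orthogonal to both weight vectors $(1,\dots,1)$ and $(i_1,\dots,i_s)$ in $\mathbb{R}^s$; symmetrically, the row constraints \eqref{cons:2} demand that, in each occupied row, the restriction to these columns be orthogonal to $(1,\dots,1)$ and $(j_1,\dots,j_t)$ in $\mathbb{R}^t$. Because $s>2$ and $t>2$, each of these two orthogonal complements is nontrivial, so I would pick nonzero vectors $\boldsymbol{v}\in\mathbb{R}^s$ and $\boldsymbol{w}\in\mathbb{R}^t$ inside them---generically with all coordinates nonzero---and define the rank-one pattern $e_{i_a,j_b}=v_a w_b$, whose support is exactly the full $s\times t$ block.

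I would then check that $\boldsymbol{e}$ satisfies every constraint. The $b$-th occupied column of $\boldsymbol{e}$ equals $w_b\boldsymbol{v}$, orthogonal to both column weight vectors by the choice of $\boldsymbol{v}$, so \eqref{cons:1} holds; likewise the $a$-th occupied row equals $v_a\boldsymbol{w}$, giving \eqref{cons:2}. For the four global constraints \eqref{cons:3}, the rank-one structure makes each global sum factor as $(\boldsymbol{u}\cdot\boldsymbol{v})(\boldsymbol{z}\cdot\boldsymbol{w})$, where $\boldsymbol{u}$ ranges over the row weight vectors $(1,\dots,1),(i_1,\dots,i_s)$ and $\boldsymbol{z}$ over the column weight vectors $(1,\dots,1),(j_1,\dots,j_t)$; since $\boldsymbol{u}\cdot\boldsymbol{v}=0$ in each case, all four vanish. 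Thus $\boldsymbol{e}$ is a genuine E$(s,t)$ pattern producing the same parity symbols as $\boldsymbol{0}$, which completes the argument.

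The main obstacle is less a conceptual hurdle than a verification that must be handled carefully on two fronts. First, the four global checks: it is the rank-one factorization together with the orthogonality $\boldsymbol{v}\perp(1,\dots,1),(i_1,\dots,i_s)$ that forces them to vanish, and I would present this factorization explicitly rather than expand the double sums. Second, the dimension count: I must use that the indices $i_1,\dots,i_s$ are distinct (and likewise $j_1,\dots,j_t$), so that $(1,\dots,1)$ and $(i_1,\dots,i_s)$ are linearly independent, guaranteeing the orthogonal complements have dimensions $s-2\ge 1$ and $t-2\ge 1$ and hence admit the required nonzero vectors with all coordinates nonzero.
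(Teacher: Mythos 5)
Your proof is correct, but it takes a genuinely different route from the paper. The paper's own proof is a two-line reduction: any E$(s,t)$ with $\min(s,t)\geq 3$ contains an E$(3,3)$ sub-pattern, and Lemma~3 already exhibited (via symbolic Gaussian elimination showing the $16\times 9$ syndrome matrix has rank $8<9$) that E$(3,3)$ is not always correctable, so the larger pattern inherits the ambiguity. You instead construct an explicit confusable error for \emph{every} admissible $(s,t)$: a rank-one pattern $e_{i_a,j_b}=v_a w_b$ with $\boldsymbol{v}\perp(1,\dots,1),(i_1,\dots,i_s)$ and $\boldsymbol{w}\perp(1,\dots,1),(j_1,\dots,j_t)$, whose existence follows from the dimension count $s-2\geq 1$, $t-2\geq 1$, and whose syndromes all vanish by the factorization $(\boldsymbol{u}\cdot\boldsymbol{v})(\boldsymbol{z}\cdot\boldsymbol{w})$. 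Your verification of the row, column, and global checks is sound, and your care about distinctness of the indices (so the two weight vectors are independent) is exactly the right point to police. What your approach buys: it is self-contained (it does not depend on the unverified-by-hand elimination in Lemma~3), it actually re-proves Lemma~3 as the case $s=t=3$ (where $\boldsymbol{v}$ is, up to scale, $(i_3-i_2,\,i_1-i_3,\,i_2-i_1)$ --- precisely the $(1,-2,1)$-type patterns the paper uses as its earlier counterexample), and it makes transparent why the threshold is exactly $3$: there are only two weight vectors per direction, so the per-row and per-column orthogonal complements become nontrivial as soon as $s,t\geq 3$. What the paper's approach buys is brevity, given that Lemma~3 has already been established. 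One small caution: for the corollary you do not actually need $\boldsymbol{v}$ and $\boldsymbol{w}$ to have all coordinates nonzero --- an error supported on a sub-block of the prescribed rows and columns is still consistent with the side information and already certifies ambiguity --- so that genericity discussion, while correct, can be dropped.
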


\begin{proof}
  Note that any error pattern E$(s, t)$ that satisfies $\min(s, t)$ must contain a sub-pattern E$(s=3, t=3)$. According to Lemma 3, this E$(s=3,t=3)$ is not always correctable, so the original error pattern E$(s, t)$ is also uncorrectable.
\end{proof}

\section{Detection and Correcting Process}
In this section, we present the detection and correction process for error pattern E$(s, t)$  of our framework $\mathcal{M}(n+2, k, m+2)$.

For the error pattern E$(s, t)$, we can detect and correct the errors in the matrix $C$ by the following steps:
\begin{enumerate}
  \item \textbf{Detection}. We first detect the error pattern E$(s, t)$ in the matrix $C$. 
  Calculate the sum of the values of the symbols in each row and each column of the matrix $C$ 
  and compare the difference between the corresponding parity symbols and the threshold $\delta$.
  If
  \begin{eqnarray*}
    && |\sum_{j=1}^{m} \bar{c}_{i,j} - \bar{c}_{i, m+1}| \leq \delta, \quad i \in \{1,2,\ldots,n\}, \\
    && |\sum_{j=1}^{m} j\bar{c}_{i,j} - \bar{c}_{i, m+2}| \leq \delta, \quad i \in \{1,2,\ldots,n\}, \\
    && |\sum_{i=1}^{n} \bar{c}_{i,j} - \bar{c}_{n+1, j}| \leq \delta, \quad j \in \{1,2,\ldots,m\},\\
    && |\sum_{i=1}^{n} i\bar{c}_{i,j} - \bar{c}_{n+2, j}| \leq \delta, \quad j \in \{1,2,\ldots,m\},
  \end{eqnarray*}
  then we can determine that the error pattern E$(s, t)$ does not exist in the matrix $C$.
  Otherwise, if there exist two sets of row and column indices $E_r = \{i_1, i_2, \ldots, i_s\}$ and $E_t = \{j_1, j_2, \ldots, j_t\}$ such that
  {\small \begin{align*}
    && |\sum_{j=1}^{m} \bar{c}_{i,j} - \bar{c}_{i, m+1}| > \delta \quad \text{or} \quad |\sum_{j=1}^{m}j \bar{c}_{i,j} - \bar{c}_{i, m+1}| > \delta, \\ &&\qquad i \in E_r, \\
    && |\sum_{i=1}^{n} \bar{c}_{i,j} - \bar{c}_{n+1, j}| > \delta \quad \text{or} \quad |\sum_{i=1}^{n} i\bar{c}_{i,j} - \bar{c}_{n+1, j}| > \delta,  \\&&\qquad j \in E_t.
  \end{align*}}
  Denote $|E_s| = s, |E_t| = t$, if $\min(s, t) \geq 3$, our framework cannot correct errors and needs to be recalculated.
  Else if $\min(s, t) \leq 2$, we can correct the error pattern E$(s, t)$ and go to the next step.

  \item \textbf{Correction}. Without loss of generality, we assume that $s \leq t$.
  If $s = 1$, suppose $E_s = \{s_1\}$, for any $t^* \in E_t$, there must be calculation error occurred in row $s_1$ column $t^*$, then we can 
  correct the symbol by $$\bar{c}^{true}_{s_1, t^*} = \bar{c}_{s_1, t^*} - (\sum_{i=1}^n\bar{c}_{i, t^*} - \bar{c}_{n+1, t^*}).$$ 
  If $s = 2$, suppose $E_s = \{s_1, s_2\}$, for any $t^* \in E_t$, there may be calculation error occurred in row $s_1$ or $s_2$ column $t^*$,
  then we have an equation system
  \begin{eqnarray*}
    \begin{bmatrix}
      1 & 1 \\
      s_1 & s_2
    \end{bmatrix}
    \begin{bmatrix}
      e_{s_1, t^*} \\
      e_{s_2, t^*}
    \end{bmatrix} = 
    \begin{bmatrix}
      \bar{c}_{n+1, t^*} - P_{n1}\boldsymbol{c}_{*t^*} \\
      \bar{c}_{n+2, t^*} - P_{n2}\boldsymbol{c}_{*t^*}
    \end{bmatrix}.
  \end{eqnarray*}
  Solve this equation system, we can get the correct symbols 
  \begin{eqnarray*}
    &&\bar{c}^{true}_{s_1, t^*} = \bar{c}_{s_1, t^*} - e_{s_1, t^*}, \\
    &&\bar{c}^{true}_{s_2, t^*} = \bar{c}_{s_2, t^*} - e_{s_2, t^*}.
  \end{eqnarray*}
\end{enumerate}
Our detection and correction process can detect and correct the error pattern E$(s, t)$ in the matrix $C$. The corrected symbols may exhibit minor deviations from their true values, which can be attributed solely to computational rounding errors. These discrepancies remain within acceptable bounds and do not significantly impact the overall correction accuracy.

\section{Simulation Experiments and Analysis}
In this section, we present the experimental results of our framework $\mathcal{M}(n+2, k, m+2)$ and the checksum algorithm \cite{braun2014abft} for matrix multiplication.

In our experiments, we consider conducting error correction tests for the matrix multiplication on V100 GPU across multiple scenarios.


We first focus on the error-correcting capabilities of our framework when a single error occurs in matrix multiplication. The main scenarios we consider are as follows: (a) a symbol error in the left matrix $A$, (b) a 
symbol error in the right matrix $B$, and (c) a symbol error in the resulting matrix $C$.
To demonstrate the error-correction capabilities of our framework, we mainly consider three scenarios: (d) one symbol error each in the left 
matrix $A$ and the output matrix $C$, (e) one symbol error each in the right matrix $B$ and the output matrix $C$, and (f) two symbol errors occurring in the output matrix $C$.

Our experiments are shown in Table 1. For the first three types of silent data corruption, our algorithm maintains roughly the same performance as the checksum algorithm, both being able to correctly correct errors with 100\% probability at a cost of 18\% to 29\%. For the last three types of silent data corruption, the checksum algorithm cannot perform normal error correction because the error type is beyond the checksum algorithm's ability, but our algorithm can correctly correct all errors with 100\% probability at a cost of 24\% to 37\%. Especially, our algorithm can correct any two errors in the output matrix C with probability 100\% at a cost of only 20\%.

\section{Conclusion}
In this paper, we propose a novel error-correcting coding framework for matrix multiplication. We have fully explored the algorithm's error correction capability and used a theoretical proof to provide an upper limit for the algorithm's error correction capability. Through experiments, our algorithm not only performs on par with existing algorithms within the error range that existing algorithms can support, but also provides a highly reliable and efficient correction strategy in areas where existing algorithms cannot correctly correct errors. How to further reduce the cost of efficient correction of matrix multiplication is one of our future directions.

\ifCLASSOPTIONcaptionsoff
  \newpage
\fi
\newpage
\bibliographystyle{IEEEtran}
\bibliography{main}

\clearpage
\appendix
\subsection{Proof of Theorem 1}
\begin{proof}
   Consider an error pattern where all erroneous symbols are confined to a subset of $s$ rows and $t$ columns in matrix $C$ with $\max (s, t) \leq 2$. 
  If $s=t=1$ (i.e., only a single symbol is erroneous), the pattern becomes trivial, as the error can be corrected using a conventional checksum. Therefore, we focus on cases where two or more errors occur, meaning at least one of $s$ or $t$ must be 2. Without loss of generality, we assume $t=2$.
  We consider the following two cases:
  \begin{enumerate}
    \item $s = 1$. In this case, all errors are localized within a single row of the matrix $C$. Error detection is performed by comparing two distinct metrics for each row: (1) the difference between the row sum and the first row checksum, and (2) the difference between the weighted row sum and the second row checksum. If either of these differences exceeds a predetermined threshold $\delta$ for any given row, it signifies the presence of erroneous symbols, thereby enabling the identification of the faulty row, denoted as $s_1$.
    The column checksum symbols $\{\bar{c}_{n+1, j}\}_{j=1,2,\ldots,m}$ are then utilized in a manner analogous to the row checksum verification process. This allows for the precise determination of the column indices $\{j_1,j_2\}$ containing the erroneous symbols.
    As stipulated by the constraints in (\ref{cons:2}), the errors located in row $s_1$ can be effectively corrected through the application of the corresponding column parity symbols $\{\bar{c}_{n+1,j_1}, \bar{c}_{n+1, j_2}\}$. 
    
    \item $s = 2$. In this case, all errors are distributed in two rows of matrix $C$. Error detection is achieved by evaluating two independent metrics for each row: (1) the discrepancy between the row sum and the first row checksum, and (2) the discrepancy between the weighted row sum and the second row checksum. If either discrepancy exceeds a predefined threshold $\delta$ for any row, it confirms the presence of erroneous symbols, thereby enabling the identification of the faulty rows $\{s_1, s_2\}$. Similarly by using the checksum symbols $\{\bar{c}_{m+1, j}\}_{j=1,2,\ldots,m}$ for each column, we can determine the specific column set $\{j_1,j_2\}$ in which the error occurred.
    According to the constraints (\ref{cons:2}), we can obtain that the errors in the two rows can be corrected by the column parity symbols $\{\bar{c}_{n+1, j_1}, \bar{c}_{n+2, j_1}, \bar{c}_{n+1, j_2}, \bar{c}_{n+2, j_2}\}$.
  \end{enumerate}
  The local parity symbols can correct the errors in both cases. 
  Therefore, the theorem holds.
\end{proof}

  \begin{figure*}[htpb]
\begin{equation}\label{eq:5}
\begin{bmatrix}
      1 & 1 & 1 & 0 & 0 & 0 & 0 & 0 & 0 \\
      t_1 & t_2 & t_3 & 0 & 0 & 0 & 0 & 0 & 0 \\
      0 & 0 & 0 & 1 & 1 & 1 & 0 & 0 & 0 \\
      0 & 0 & 0 & t_1 & t_2 & t_3 & 0 & 0 & 0 \\
      0 & 0 & 0 & 0 & 0 & 0 & 1 & 1 & 1 \\
      0 & 0 & 0 & 0 & 0 & 0 & t_1 & t_2 & t_3 \\
      s_1 & 0 & 0 & s_2 & 0 & 0 & s_3 & 0 & 0 \\
      0 & s_1 & 0 & 0 & s_2 & 0 & 0 & s_3 & 0 \\
      0 & 0 & s_1 & 0 & 0 & s_2 & 0 & 0 & s_3 \\
      1 & 0 & 0 & 1 & 0 & 0 & 1 & 0 & 0 \\
      0 & 1 & 0 & 0 & 1 & 0 & 0 & 1 & 0 \\
      0 & 0 & 1 & 0 & 0 & 1 & 0 & 0 & 1 \\
      1 & 1 & 1 & 1 & 1 & 1 & 1 & 1 & 1 \\
      t_1 & t_2 & t_3 & t_1 & t_2 & t_3 & t_1 & t_2 & t_3 \\
      s_1 & s_2 & s_3 & s_1 & s_2 & s_3 & s_1 & s_2 & s_3 \\
      t_1 s_1 & t_2 s_1 & t_3 s_1 & t_1 s_2 & t_2 s_2 & t_3 s_2 & t_1 s_3 & t_2 s_3 & t_3 s_3
      \end{bmatrix}\begin{bmatrix}
      e_{s_1, t_1} \\
      e_{s_1, t_2} \\
      e_{s_1, t_3} \\
      e_{s_2, t_1} \\
      e_{s_2, t_2} \\
      e_{s_2, t_3} \\
      e_{s_3, t_1} \\
      e_{s_3, t_2} \\
      e_{s_3, t_3}
      \end{bmatrix}=
      \begin{bmatrix}
        P_{m1}\boldsymbol{c}_{s_1*} - \bar{c}_{s_1, m+1} \\
        P_{m2}\boldsymbol{c}_{s_1*} - \bar{c}_{s_1, m+2} \\
        P_{m1}\boldsymbol{c}_{s_2*} - \bar{c}_{s_2, m+1} \\
        P_{m2}\boldsymbol{c}_{s_2*} - \bar{c}_{s_2, m+2} \\
        P_{m1}\boldsymbol{c}_{s_3*} - \bar{c}_{s_3, m+1} \\
        P_{m2}\boldsymbol{c}_{s_3*} - \bar{c}_{s_3, m+2} \\
        P_{n1}\boldsymbol{c}_{*t_1} - \bar{c}_{n+1, t_1} \\
        P_{n2}\boldsymbol{c}_{*t_1} - \bar{c}_{n+2, t_1} \\
        P_{n1}\boldsymbol{c}_{*t_2} - \bar{c}_{n+1, t_2} \\
        P_{n2}\boldsymbol{c}_{*t_2} - \bar{c}_{n+2, t_2} \\
        P_{n1}\boldsymbol{c}_{*t_3} - \bar{c}_{n+1, t_3} \\
        P_{n2}\boldsymbol{c}_{*t_3} - \bar{c}_{n+2, t_3} \\
        P_{n1}CP_{m1} - \bar{c}_{n+1, m+1} \\
        P_{n1}CP_{m2} - \bar{c}_{n+1, m+2} \\
        P_{n2}CP_{m1} - \bar{c}_{n+2, m+1} \\
        P_{n2}CP_{m2} - \bar{c}_{n+2, m+2} \\
      \end{bmatrix}
\end{equation}
\end{figure*}
\subsection{Proof of Theorem 2}
\begin{proof}
   Suppose that there are $s$ rows and $t$ columns of matrix $C$ all the errors are distributed in, where $s \leq n$ and $t \leq m$, and $\min (s, t) \leq 2$. 
  Without loss of generality, we assume that $s \leq t$, i.e., $s \leq 2$. 
  We consider the following two cases:
  \begin{enumerate}
    \item $s = 1$. In this case, all errors are distributed in one row (assume row $s_1$) of matrix $C$.
    According to the constraint (\ref{cons:2}), we can obtain that the errors in the row $s_1$ can be corrected by the column parity symbols $\{\bar{c}_{n+1,j_1}, \bar{c}_{n+1, j_2},\ldots, \bar{c}_{n+1, j_t}\}$.
    \item $s = 2$. In this case, all errors are distributed in two rows (assume row $s_1$ and $s_2$) of matrix $C$.
    According to the constraint (\ref{cons:2}), we can obtain that the errors in the two rows can be corrected by the column parity symbols $\{\bar{c}_{n+1, j_1}, \bar{c}_{n+2, j_1}, \bar{c}_{n+1, j_2}, \bar{c}_{n+2, j_2}, \ldots, \bar{c}_{n+1, j_t}, \bar{c}_{n+2, j_t}\}$.
  \end{enumerate}
  In both cases, the errors can be corrected by the local parity symbols. 
  Therefore, the theorem holds.
\end{proof}

\subsection{Proof of Lemma 3}
\begin{proof}
  If $s=3$ and $t=3$, we consider the error pattern E$(s = 3, t = 3)$: 
  all errors are distributed in three rows $s_1, s_2, s_3$ and three columns $t_1, t_2, t_3$ of matrix $C$
  where $1\leq s_1 < s_2 < s_3 \leq n$ and $1\leq t_1 < t_2 < t_3 \leq m$.

  Notice that there are a total of nine positions in the three rows and three columns where errors may occur. 
  We assume that the difference between the error symbols and the correct symbols at the nine positions are
  \begin{eqnarray*}
    e_{s_1, t_1}, e_{s_1, t_2}, e_{s_1, t_3}, e_{s_2, t_1}, e_{s_2, t_2}, e_{s_2, t_3}, e_{s_3, t_1}, e_{s_3, t_2}, e_{s_3, t_3}.
  \end{eqnarray*}
  In our $\mathcal{M}(n+2, k, m+2)$, 
  there are 6 row-parity symbols, 6 column-parity symbols, and 4 global parity symbols associated with these 9 symbols, for a total of 16 constraints.
  We can write the 16 constraints in Eq. \eqref{eq:5},
  where the coefficient matrix is a $16 \times 9$ matrix and the right-hand side vector is a $16 \times 1$ vector.

  Using the Gauss elimination method, we perform row elimination on the coefficient matrix and obtain the matrix after row elimination
  \begin{eqnarray*}
    \begin{bmatrix}
      1 & 0 & 0 & 0 & 0 & 0 & 0 & 0 & \frac{-s_2 t_2 + s_2 t_3 + s_3 t_2 - s_3 t_3}{s_1 t_1 - s_1 t_2 - s_2 t_1 + s_2 t_2} \\
      0 & 1 & 0 & 0 & 0 & 0 & 0 & 0 & \frac{s_2 t_1 - s_2 t_3 - s_3 t_1 + s_3 t_3}{s_1 t_1 - s_1 t_2 - s_2 t_1 + s_2 t_2} \\
      0 & 0 & 1 & 0 & 0 & 0 & 0 & 0 & \frac{-s_2 + s_3}{s_1 - s_2} \\
      0 & 0 & 0 & 1 & 0 & 0 & 0 & 0 & \frac{s_1 t_2 - s_1 t_3 - s_3 t_2 + s_3 t_3}{s_1 t_1 - s_1 t_2 - s_2 t_1 + s_2 t_2} \\
      0 & 0 & 0 & 0 & 1 & 0 & 0 & 0 & \frac{-s_1 t_1 + s_1 t_3 + s_3 t_1 - s_3 t_3}{s_1 t_1 - s_1 t_2 - s_2 t_1 + s_2 t_2} \\
      0 & 0 & 0 & 0 & 0 & 1 & 0 & 0 & \frac{s_1 - s_3}{s_1 - s_2} \\
      0 & 0 & 0 & 0 & 0 & 0 & 1 & 0 & \frac{t_2 - t_3}{-t_1 + t_2} \\
      0 & 0 & 0 & 0 & 0 & 0 & 0 & 1 & \frac{-t_1 + t_3}{-t_1 + t_2} \\
      0 & 0 & 0 & 0 & 0 & 0 & 0 & 0 & 0 \\
      \vdots & \vdots & \vdots & \vdots & \vdots & \vdots & \vdots & \vdots & \vdots \\
      0 & 0 & 0 & 0 & 0 & 0 & 0 & 0 & 0 \\
      \end{bmatrix}.
  \end{eqnarray*}
  From this we can see that no matter what values $s_1,s_2,s_3,t_1,t_2,t_3$ take, the matrix after row elimination has only 8 non-zero rows, so the rank of the matrix is 8, 
  that is, the rank of the original coefficient matrix is $8 < 9$, and the error pattern cannot be corrected in this case.
\end{proof}

\end{document}